\documentclass[conference]{IEEEtrans}
\usepackage{amsmath}
\usepackage{amssymb}
\usepackage{amsfonts}
\usepackage{graphicx}
\usepackage{float}
\usepackage{graphics}
\usepackage{graphicx}
\usepackage{theorem}
\usepackage{euscript}
\usepackage{psfrag}

\newcommand{\Mat}[1]{\boldsymbol{#1}}

\DeclareMathOperator{\diag}{diag}

\def\E{{\mathbb E}}

\newcommand{\myscale}{0.348}

\theoremstyle{plain}
\newtheorem{thm}{Theorem}

\newtheorem{prop}[thm]{Proposition}

\begin{document}

\title{Throughput of Cellular Uplink with Dynamic User Activity and Cooperative Base-Stations}
\author{%
\authorblockN{O. Somekh\authorrefmark{1}, O.
Simeone\authorrefmark{2}, H. V. Poor\authorrefmark{1}, and S. Shamai
(Shitz)\authorrefmark{3}} \\
\authorblockA{\authorrefmark{1}
Department of Electrical Engineering, Princeton University, Princeton, NJ 08544, USA, {\{orens, poor\}@princeton.edu}}
\authorblockA{\authorrefmark{2}
CWCSPR, New Jersey Institute of Technology, Newark, NJ 07102-1982, USA, {osvaldo.simeone@njit.edu.}}
\authorblockA{\authorrefmark{3}
Department of Electrical Engineering,Technion, Haifa, 32000, Israel, {\{sshlomo\}@ee.technion.ac.il}}
}
\maketitle

\begin{abstract}
The throughput of a linear cellular uplink with a random number of users,
different power control schemes, and cooperative base stations is considered
in the large system limit where the number of cells is large for non fading
Gaussian channels. The analysis is facilitated by establishing an analogy
between the cellular channel per-cell throughput with joint multi-cell
processing (MCP), and the rate of a deterministic inter-symbol interference
(ISI) channel with flat fading. It is shown that, under certain conditions,
the dynamics of cellular systems (i.e., a random number of users coupled with 
a given power control scheme) can be interpreted, as far as the uplink
throughput is concerned, as the flat fading process of the equivalent ISI
channel. The results are used to demonstrate the benefits of MCP over the
conventional single cell processing approach as a function of various system
parameters in the presence of random user activity.
\end{abstract}



\vspace{-0.0cm}

\section{Introduction}

Wireless communications systems in general and cellular systems in
particular are of major interest as they allow the provision of continuous services to mobile users. In recent years a considerable research effort has been devoted to the development of new technologies for providing better services
and extending system coverage. In this context, the use of joint multi-cell
processing (MCP) has been identified as a key tool for enhancing system
performance. Since its introduction by Wyner in \cite{Wyner-94}, many
aspects of MCP has been studied (see 
\cite{Somekh-Simeone-Barness-Haimovich-Shamai-BookChapt-07} and references
therein for a survey of recent results on MCP). Here we are interested in
studying the impact of users with dynamic activity (i.e., each user is
active with a certain probability in each time slot) on the performance of
cellular uplink with MCP. Early attempts to deal with random number of users
in cellular systems focused on single cell processing (SCP) (e.g., \cite%
{Shamai-Wyner-97-II}), and were based on the notion of a random
multiple-access channel (MAC) \cite{Plotnik-ISIT90}. In a recent work \cite%
{Somekh-Simeone-Poor-Shamai-Allerton08}, the per-cell throughput of a simple
infinite linear cellular uplink with a \textit{single} dynamic user per cell
is analyzed. The analysis relies on the special topology of the model in which
interference stems from a single neighboring cell only. In a parallel work 
\cite{Levy-Zeitouni-Shamai-2Tap_UP08}, the authors use similar tools to
consider also the resulting rate statistics to derive outage performance for
the same cellular uplink.

In this work, we extend the results of \cite%
{Somekh-Simeone-Poor-Shamai-Allerton08}, derived for a single user per cell,
and study the case of more than one dynamic user per cell in the
large-system limit. In particular, we calculate the per-cell throughput
supported by a simple linear infinite non-fading cellular uplink model,
in which the number of users in each cell is a binomially distributed random
variable (r.v.), and all the BSs jointly decode their received signals to
recover the users' messages. To facilitate analytical treatment we use a
linear variant of the Wyner cellular model family where each user
\textquotedblleft sees" only a finite (but arbitrary) number of BSs \cite%
{Wyner-94}. The main analytical tool used here is a recent result by Tulino 
\textit{et al.} that provides expressions for the achievable rates of a
linear time invariant (LTI) inter-symbol interference (ISI) channel with
flat fading applied to its output symbols \cite%
{Tulino-Verdu-Caire-Shamai-ISIT08}. By establishing an analogy between the
per-cell throughput of the cellular uplink and the rate of an ISI channels
(similarly to Wyner \cite{Wyner-94}), we show that results of \cite%
{Tulino-Verdu-Caire-Shamai-ISIT08} can be applied to the cellular setup to
address the dynamic setting at hand. In particular, the path gain between a
user and the BSs are interpreted as the ISI channel coefficients and the
cellular power control scheme determines the fading statistics of the
equivalent ISI channel. We use the results to demonstrate the benefits of
MCP over SCP for a cellular system with dynamic user activity. In a
related work \cite{Tulino-Verdu-Caire-Shamai_ISIT07}, achievable rates for
an output-erasure ISI channel were derived and used to calculate the
per-cell throughput of a cellular uplink with MCP and base-stations
subjected to backhaul failures.

\section{System Model}

\begin{figure}[t]
\begin{center}
\includegraphics[angle=-90, scale=0.305]{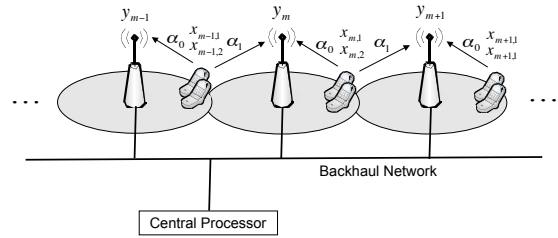} \vspace{-3.0cm}
\end{center}
\caption{The soft-handoff cellular uplink for $K=2$.}
\label{fig: SHO model}
\end{figure}

We consider a linear cellular Gaussian (no fading) uplink, where $M$
identical cells are arranged along a line \cite{Wyner-94}. Each cell includes a
BS and $K$ identical mobile terminals (MTs). While ignoring boundary
effects, it is assumed that each MT's transmissions are received by $%
L_{1}+L_{2}+1$ BSs only: its local BS, the $L_{1}$ adjacent BSs on its
left, and the $L_{2}$ adjacent BSs on its right (See Fig. \ref{fig: SHO
model} for the special case of $L_{1}=0$, $L_{2}=1$, and $K=2$). In
particular, the signals of the $m_{1}$th cell are received at the $m_{2}$th BS
with signal path gain $\alpha _{m_{1}-m_{2}}\in \mathbb{C}$. We further
assume perfect symbol and block synchronization, and that the total cell
transmission power is subjected to an average power constraint $P$. As
mentioned above we consider a dynamic model in which during each slot (or
transmission block) MTs are randomly and independently active with
probability $(1-q)$, and otherwise are kept silent by means of control or lack of input
data. Under these assumptions, the received signal at the $m$th BS
for an arbitrary symbol of an arbitrary block is 
\begin{equation}
y_{m}=\sum_{l=-L}^{L}\sum_{k=1}^{K}x_{m+l,k}e_{m+l,k}\alpha _{l}+z_{m}\ ;%
\begin{array}{c}
m=1,\ldots ,M \\ 
k=1,\ldots ,K%
\end{array}%
,  \label{eq: received signal}
\end{equation}%
where $x_{n,k}$ is the $n$th cell $k$th MT transmission, $x_{n,k}\sim 
\mathcal{CN}(0,p_{n,k})$, $e_{n,k}\in \lbrack 0,1]$ is the corresponding
i.i.d. \textit{Bernoulli} activity r.v., $e_{n,k}\sim \mathcal{B}(1-q)$, $%
z_{m}$ is the additive Gaussian noise, $z_{m}\sim \mathcal{CN}(0,1)$, and
out-of-range indices should be ignored. The transmission powers $\{p_{m,k}\}$
are functions of the activity pattern and the selected power control method
(to be discussed later on). The activity r.v.'s are assumed to be independent and identically distributed (i.i.d.)
among the users but are non-ergodic along the time index for each user.
Finally, in order to satisfy a per-cell power constraint any power
allocation must satisfy 
\begin{equation}
\sum_{k=1}^{K}p_{m,k}\leq P\quad ,\quad \forall m\ .
\end{equation}

\section{Preliminary}

\vspace{-0.0cm}

The main analytical tool we use in this work is reported in a recent work 
\cite{Tulino-Verdu-Caire-Shamai-ISIT08}, in which Tulino \textit{et al.}
study the capacity of a deterministic inter-symbol interference (ISI) flat
fading channel (depicted in Fig. \ref{fig: ISI flat fading channel}). The
channel includes a unit power stationary Gaussian input $x_{i}$, with power
spectral density (PSD) $S_{x}(f)$, which enters a linear time invariant
filter $H(f)$. The output of the latter is then multiplied by a flat fading
i.i.d. process $\sqrt{\gamma }A_{i}$ where $\gamma $ is a non-negative
constant, and corrupted by zero mean unit power white Gaussian noise $z_{i}$. 
Assuming that only the decoder is aware of the filter coefficient, the
fading process, the constant $\gamma $ and the statistics of the input and
noise signals, the \textit{ergodic} input-output mutual information is
proved in \cite{Tulino-Verdu-Caire-Shamai-ISIT08}\footnote{%
It is noted that this result is actually taken from the presentation slides
and is more compact than that reported in the conference proceedings \cite%
{Tulino-Verdu-Caire-Shamai-ISIT08}.} to be given by 
\begin{multline}
I(\gamma )=\int_{0}^{1}\log _{2}\left( 1+\gamma \beta S(f)\right) df+
\label{eq: Tulino rate} \\
+\E\left( \log _{2}\left( 1+\gamma \nu \left\vert A\right\vert ^{2}\right) %
\right) -\log _{2}\left( 1+\gamma \beta \nu \right) \ ,
\end{multline}%
where $S(f)=S_{x}(f)\left\vert H(f)\right\vert ^{2}$ is the filter output
PSD, and $\beta$ and $\nu $ are the unique positive solutions to 
\begin{equation}
\E\left( \frac{1}{1+\gamma \nu \left\vert A\right\vert ^{2}}\right) =\frac{1%
}{1+\gamma \beta \nu }=\int_{0}^{1}\frac{1}{1+\gamma \beta S(f)}df\ .
\label{eq: Tulino fix point}
\end{equation}%
In the special case in which $A\in \{0,1\}\sim \mathcal{B}(1-\tilde{q})$ (an
output erasure channel \cite{Somekh-Simeone-Poor-Shamai-Allerton08}\cite{Tulino-Verdu-Caire-Shamai_ISIT07}), the mutual information 
\eqref{eq:
Tulino rate} reduces to 
\begin{equation}
I(\gamma )=\int_{0}^{1}\log _{2}\left( 1+\gamma \beta S(f)\right) df+d(%
\tilde{q}\Vert 1-\beta )\ ,  \label{eq: Tulino rate erasure}
\end{equation}%
where $d(\tilde{q}\Vert 1-\beta )$ is the relative entropy (in bits) between 
$\mathcal{B}(1-\tilde{q})$ and $\mathcal{B}(\beta )$, and $0\leq \beta \leq
1-\tilde{q}$ is the unique solution to 
\begin{equation}
\frac{\tilde{q}}{1-\beta }=\int_{0}^{1}\frac{1}{1+\gamma \beta S(f)}df\ .
\label{eq: Tulino fix point erasure}
\end{equation}


\begin{figure}[t]
\begin{center}
\includegraphics[angle=-90, scale=0.3]{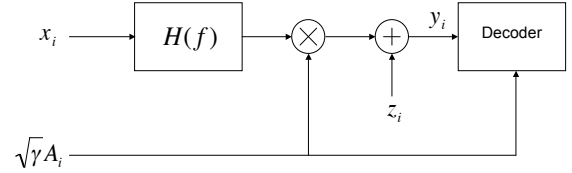} 
\vspace{-3.70cm}
\end{center}
\caption{The deterministic ISI flat fading channel studied in \protect\cite%
{Tulino-Verdu-Caire-Shamai-ISIT08}.}
\label{fig: ISI flat fading channel}
\end{figure}

\vspace{-0.0cm}

\section{Multicell Processing}

\vspace{-0.0cm}

With MCP, the BSs send their received signals to a central processor (CP)
via an ideal backhaul network. The CP collects the received signals and
jointly decodes the MTs' messages. We also assume that the CP is aware of
the activity pattern of all the MTs. Observing \eqref{eq: received signal},
the received signal vector at the BSs for an arbitrary symbol of an
arbitrary block is given by 
\begin{equation}
Y=\boldsymbol{H}\boldsymbol{E}X+Z\ ,  \label{eq: received signal vector}
\end{equation}%
where $X$ is the MTs' $MK\times 1$ transmit vector $X\sim \mathcal{CN}(%
\boldsymbol{0},\boldsymbol{Q})$ and $\boldsymbol{Q}=\diag{\{p_{1,1},%
\ldots,p_{M,K}\}}$,
$\boldsymbol{E}=\diag{\{e_{1,1}\ldots,e_{M,K}\}}$ is the diagonal $MK\times
MK$ activity matrix, $\boldsymbol{H}$ is the $M\times MK$, $L_{1}+L_{2}+1$
block diagonal channel transfer matrix $[\boldsymbol{H}]_{m,(n-1)K+k}=\alpha
_{n-m},\ k=1,2,\ldots \,K$, and $Z$ is the $M\times 1$ noise vector $Z\sim 
\mathcal{CN}(\boldsymbol{0},\boldsymbol{I})$.


With optimal MCP, the \emph{throughput} \emph{per cell} is a r.v., given in
the large system limit by 
\begin{equation}  \label{eq: MCP rate user activity}
R_{\mathrm{mcp}}= \lim_{M\rightarrow\infty} \frac{1}{M}\log_2 \det{\left(%
\boldsymbol{I}+ \boldsymbol{H}\boldsymbol{E}\boldsymbol{Q}\boldsymbol{E}%
^\dagger\boldsymbol{H}^\dagger\right)}\ .
\end{equation}
A close inspection of the covariance matrix $G=\boldsymbol{I}+ \boldsymbol{H}%
\boldsymbol{E}\boldsymbol{Q}\boldsymbol{E}^\dagger\boldsymbol{H}^\dagger$
reveals the following. 
\vspace{-0.2cm}

\begin{prop}
\label{Prop: MK to M MAC} With MCP the $MK$ user MAC of 
\eqref{eq: received
signal vector} is equivalent in terms of the throughput per cell to the
following $M$ user MAC 
\begin{equation}  \label{eq: received signal vector compact}
Y=\tilde{\boldsymbol{H}}\tilde{\boldsymbol{E}}\tilde{X}+Z\ ,
\end{equation}
where $\tilde{X}$ is an $M\times 1$ vector $\tilde{X}\sim\mathcal{CN}(%
\boldsymbol{0},\boldsymbol{I})$, $\tilde{\boldsymbol{E}}$ is a diagonal $%
M\times M$ matrix $[\tilde{\boldsymbol{E}}]_{m,m}=\sqrt{\sum_{k=1}^K
e_{m,k}p_{m,k}}$, and $\tilde{\boldsymbol{H}}$ is an $M\times M$ matrix $[%
\tilde{\boldsymbol{H}}]_{i,j}=\alpha_{j-i}$.
\end{prop}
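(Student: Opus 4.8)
The plan is to observe that, for every realization of the activity matrix $\boldsymbol{E}$, the per-cell throughput in \eqref{eq: MCP rate user activity} depends on $\boldsymbol{E}$ and on the powers $\{p_{m,k}\}$ only through the $M\times M$ matrix $\boldsymbol{H}\boldsymbol{E}\boldsymbol{Q}\boldsymbol{E}^{\dagger}\boldsymbol{H}^{\dagger}$, and then to show that this matrix is left unchanged when the $MK$-user model \eqref{eq: received signal vector} is replaced by the compact $M$-user model \eqref{eq: received signal vector compact}. Since in the latter $\tilde{X}\sim\mathcal{CN}(\boldsymbol{0},\boldsymbol{I})$, its per-cell throughput, computed as in \eqref{eq: MCP rate user activity}, equals $\lim_{M\rightarrow\infty}\frac{1}{M}\log_{2}\det\left(\boldsymbol{I}+\tilde{\boldsymbol{H}}\tilde{\boldsymbol{E}}\tilde{\boldsymbol{E}}^{\dagger}\tilde{\boldsymbol{H}}^{\dagger}\right)$, so it suffices to establish the matrix identity $\boldsymbol{H}\boldsymbol{E}\boldsymbol{Q}\boldsymbol{E}^{\dagger}\boldsymbol{H}^{\dagger}=\tilde{\boldsymbol{H}}\tilde{\boldsymbol{E}}\tilde{\boldsymbol{E}}^{\dagger}\tilde{\boldsymbol{H}}^{\dagger}$.

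Next I would verify this identity by a direct entrywise computation. Because $\boldsymbol{E}$ and $\boldsymbol{Q}$ are diagonal, $e_{n,k}\in\{0,1\}$ (hence $e_{n,k}^{\dagger}e_{n,k}=e_{n,k}$) and $p_{n,k}\geq 0$, the central factor collapses to $\boldsymbol{E}\boldsymbol{Q}\boldsymbol{E}^{\dagger}=\diag\{e_{n,k}p_{n,k}\}$. Using the block-diagonal structure $[\boldsymbol{H}]_{m,(n-1)K+k}=\alpha_{n-m}$, the $(m_{1},m_{2})$ entry of the left-hand side is
\begin{equation*}
\sum_{n}\sum_{k=1}^{K}\alpha_{n-m_{1}}\,e_{n,k}p_{n,k}\,\bar{\alpha}_{n-m_{2}}
=\sum_{n}\alpha_{n-m_{1}}\bar{\alpha}_{n-m_{2}}\Bigl(\sum_{k=1}^{K}e_{n,k}p_{n,k}\Bigr).
\end{equation*}
On the right-hand side $\tilde{\boldsymbol{E}}\tilde{\boldsymbol{E}}^{\dagger}=\diag\{\sum_{k=1}^{K}e_{m,k}p_{m,k}\}$ and $[\tilde{\boldsymbol{H}}]_{i,j}=\alpha_{j-i}$, so its $(m_{1},m_{2})$ entry is exactly the same sum. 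Hence the two Gram matrices coincide for every $\boldsymbol{E}$, the determinant in \eqref{eq: MCP rate user activity} is unchanged, and so is the per-cell throughput (as a random variable, not merely in expectation).

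I do not expect a genuine analytical obstacle: the entire content is the structural remark that all $K$ mobiles in cell $n$ are received at BS $m$ through the \emph{same} path gain $\alpha_{n-m}$, so only the aggregate active received power $\sum_{k=1}^{K}e_{n,k}p_{n,k}$ of cell $n$ enters the channel, and the intra-cell users can therefore be merged into a single effective user without loss. The only point requiring a little care is to carry out the argument for each realization of the activity pattern, so that the equivalence holds for the throughput random variable itself (and thus for all of its statistics), which is what is needed in the sequel.
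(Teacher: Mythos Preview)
Your proposal is correct and follows essentially the same approach as the paper: the paper's proof also reduces the claim to showing the matrix identity $\boldsymbol{H}\boldsymbol{E}\boldsymbol{Q}\boldsymbol{E}^{\dagger}\boldsymbol{H}^{\dagger}=\tilde{\boldsymbol{H}}\tilde{\boldsymbol{E}}\tilde{\boldsymbol{E}}^{\dagger}\tilde{\boldsymbol{H}}^{\dagger}$ via ``straightforward matrix multiplication,'' which is exactly the entrywise computation you spell out.
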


\begin{proof}
To prove this claim it is enough to show that the covariance matrices of the
two Gaussian vectors conditioned on the activity r.v.'s (expressions %
\eqref{eq: received signal vector} and 
\eqref{eq: received signal vector
compact}) are equal. This is easily achieved by a straightforward matrix
multiplication, i.e., by showing that $\boldsymbol{H}\boldsymbol{E}%
\boldsymbol{Q}\boldsymbol{E}^\dagger\boldsymbol{H}^\dagger=\tilde{%
\boldsymbol{H}}\tilde{\boldsymbol{E}} \tilde{\boldsymbol{E}}^\dagger\tilde{%
\boldsymbol{H}}^\dagger$. (See also \cite[Lemma 3.1]{Wyner-94} for a similar
claim.)
\end{proof}

It is concluded that the system is equivalent in terms of throughput to a
system with a single user per-cell. Moreover, the transmissions of the $m$th
cell virtual user are multiplied by the r.v. $[\tilde{\boldsymbol{E}}]_{m,m}=%
\tilde{e}_{m,m}\triangleq\sqrt{\sum_{k=1}^K e_{m,k}p_{m,k}}$ which is a
function of the activity pattern and the power control scheme of the
original system.

Now, the ground is set for presenting our main observation. 
\vspace{-0.5cm}

\begin{prop}
\label{Prop: model equivalence} In the large system limit ($%
M\rightarrow\infty$) the per-cell throughput of the $MK$ user MAC of %
\eqref{eq: received signal vector}, is equal to the capacity of the
deterministic ISI channel with flat fading of \eqref{eq: Tulino rate}, with $%
S_x(f)=1$, $\gamma=1$, filter coefficients $\{\alpha_l\}_{l=-L_1}^{L_2}$
(i.e., $H(f)=\sum_{n=L_1}^{L_2} \alpha_n e^{-j2\pi n}$), and flat fading $A$
distributed as the i.i.d. diagonal elements of $\boldsymbol{\tilde{E}}$.
\end{prop}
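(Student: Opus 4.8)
The plan is to use Proposition~\ref{Prop: MK to M MAC} to pass to the single-user-per-cell channel \eqref{eq: received signal vector compact}, rearrange its normalized log-determinant by elementary determinant identities so that the cellular dynamics appear as a flat multiplicative process acting on a fixed Toeplitz filter, recognize the result as a finite-block instance of the channel of Fig.~\ref{fig: ISI flat fading channel}, and conclude by \eqref{eq: Tulino rate}. By Proposition~\ref{Prop: MK to M MAC}, $R_{\mathrm{mcp}}=\lim_{M\to\infty}\frac{1}{M}\log_2\det(\boldsymbol{I}+\tilde{\boldsymbol{H}}\tilde{\boldsymbol{E}}\tilde{\boldsymbol{E}}^\dagger\tilde{\boldsymbol{H}}^\dagger)$. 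Two applications of $\det(\boldsymbol{I}+\boldsymbol{A}\boldsymbol{B})=\det(\boldsymbol{I}+\boldsymbol{B}\boldsymbol{A})$, using that $\tilde{\boldsymbol{E}}$ is diagonal with nonnegative entries, turn the argument into $\det(\boldsymbol{I}+(\tilde{\boldsymbol{H}}^\dagger\tilde{\boldsymbol{H}})\,\diag\{\tilde e_{m,m}^{2}\})$, where $\tilde e_{m,m}^{2}=\sum_{k=1}^{K}e_{m,k}p_{m,k}$. Since $[\tilde{\boldsymbol{H}}]_{i,j}=\alpha_{j-i}$ is banded Toeplitz of bandwidth $L_1+L_2+1$, the matrix $\tilde{\boldsymbol{H}}^\dagger\tilde{\boldsymbol{H}}$ coincides, outside the $O(L_1+L_2)$ rows adjacent to each edge, with the banded Hermitian Toeplitz matrix of symbol $|H(f)|^{2}$, $H(f)=\sum_{n=-L_1}^{L_2}\alpha_n e^{-j2\pi n f}$; moreover the spectrum of the whole matrix lies in a fixed interval $[1,C]$ with $C=1+P\,(\sum_l|\alpha_l|)^2$, so $R_{\mathrm{mcp}}$ is well defined and finite.

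Next I would write the channel of Fig.~\ref{fig: ISI flat fading channel} over a block of $N$ symbols with the parameters of the statement: $S_x(f)=1$ makes the input covariance $\boldsymbol{I}$; the filtering is the $N\times N$ banded Toeplitz convolution matrix $\boldsymbol{T}_H$ of $H$; the output is scaled by the i.i.d.\ diagonal matrix $\diag\{\sqrt{\gamma}A_i\}$ with $\gamma=1$; and the noise is white of unit variance. Hence the per-symbol input–output mutual information is $\frac{1}{N}\,\E\log_2\det(\boldsymbol{I}+\diag\{A_i\}\,\boldsymbol{T}_H\boldsymbol{T}_H^\dagger\,\diag\{A_i\}^\dagger)=\frac{1}{N}\,\E\log_2\det(\boldsymbol{I}+(\boldsymbol{T}_H\boldsymbol{T}_H^\dagger)\,\diag\{|A_i|^{2}\})$, again by $\det(\boldsymbol{I}+\boldsymbol{A}\boldsymbol{B})=\det(\boldsymbol{I}+\boldsymbol{B}\boldsymbol{A})$. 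An entrywise computation shows that, away from $O(L_1+L_2)$ boundary rows, $\boldsymbol{T}_H\boldsymbol{T}_H^\dagger$ is the image of $\tilde{\boldsymbol{H}}^\dagger\tilde{\boldsymbol{H}}$ under the flip permutation $m\mapsto M+1-m$ (a unitary similarity) — both being the banded Hermitian Toeplitz matrix of symbol $|H(f)|^{2}$. Since by hypothesis $A\stackrel{d}{=}\tilde e_{m,m}$, the fading diagonals $\diag\{|A_i|^{2}\}$ and $\diag\{\tilde e_{m,m}^{2}\}$ have the same (i.i.d., hence reversal-invariant) law, so the two log-determinants agree in distribution up to a correction supported on $O(L_1+L_2)$ rows.

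To finish, observe that such a correction, and likewise the Toeplitz-versus-circulant gap, is a perturbation of rank $O(L_1+L_2)=O(1)$; since all the matrices involved have spectrum in the fixed compact interval $[1,C]$, the eigenvalue-interlacing inequalities bound the resulting change in $\log_2\det(\cdot)$ by $O(\log C)=O(1)$, which vanishes after dividing by $M$. Hence $R_{\mathrm{mcp}}$ equals the large-block limit of the per-symbol mutual information of the specified ISI flat-fading channel, and \eqref{eq: Tulino rate}, evaluated at $S(f)=S_x(f)|H(f)|^{2}=|H(f)|^{2}$, $\gamma=1$, and $|A|^{2}\stackrel{d}{=}\tilde e_{m,m}^{2}=\sum_{k=1}^{K}e_{m,k}p_{m,k}$, identifies this limit with $I(1)$, proving the claim. (For $K=1$ the fading is Bernoulli-valued and \eqref{eq: Tulino rate erasure} then recovers the per-cell throughput of \cite{Somekh-Simeone-Poor-Shamai-Allerton08}.)

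The step to watch is the second one: in the cellular model the activity/power multiplies the \emph{transmitted} signal, whereas in Fig.~\ref{fig: ISI flat fading channel} the flat fading multiplies the filter \emph{output}, and one must check — via the two push-through identities above together with the asymptotic equivalence of $\tilde{\boldsymbol{H}}^\dagger\tilde{\boldsymbol{H}}$ and $\boldsymbol{T}_H\boldsymbol{T}_H^\dagger$ — that these dual placements produce the same asymptotic per-cell rate; no step is deep, but this is where the bookkeeping lives. One should also be explicit that \eqref{eq: Tulino rate} furnishes the almost-sure (equivalently, in-probability) limit of the random normalized log-determinant, which is exactly the quantity $R_{\mathrm{mcp}}$ defined in \eqref{eq: MCP rate user activity}, so that the asserted equality of a random throughput with a deterministic expression is legitimate.
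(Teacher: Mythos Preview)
Your proposal is correct and follows essentially the same route as the paper's (outline) proof: reduce via Proposition~\ref{Prop: MK to M MAC}, then use the push-through identity $\det(\boldsymbol{I}+\tilde{\boldsymbol{H}}\tilde{\boldsymbol{E}}\tilde{\boldsymbol{E}}^\dagger\tilde{\boldsymbol{H}}^\dagger)=\det(\boldsymbol{I}+\tilde{\boldsymbol{E}}^\dagger\tilde{\boldsymbol{H}}^\dagger\tilde{\boldsymbol{H}}\tilde{\boldsymbol{E}})$ to move the multiplicative process from the filter input to its output, and identify the result (up to conjugate-transposition of the tap vector, which you handle via the flip permutation) with the ISI flat-fading channel of \cite{Tulino-Verdu-Caire-Shamai-ISIT08}. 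The paper invokes \cite{Levy-Zeitouni-Shamai_CLT-UP08} for the a.s.\ convergence and leaves the Toeplitz boundary bookkeeping implicit, whereas you spell out the rank-$O(L_1+L_2)$ edge correction and the interlacing bound --- this is additional detail, not a different argument.
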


\vspace{-0.0cm}

\begin{proof}
\textit{(outline)} Following similar argumentation as those made in \cite%
{Levy-Zeitouni-Shamai_CLT-UP08} we claim that in the large system limit the
per-cell throughput $R_{\mathrm{mcp}}$ (expression \eqref{eq: MCP rate user activity}) converges almost surely (a.s.) to its expected value 
\begin{equation}
R_{\mathrm{mcp}} = \lim_{M\rightarrow\infty} \E\left(\frac{1}{M}\log_2 \det{\left(\Mat{I}+\tilde{%
\boldsymbol{H}}\tilde{\boldsymbol{E}} \tilde{\boldsymbol{E}}^\dagger\tilde{%
\boldsymbol{H}}^\dagger\right)}\right)\ ,
\end{equation}
where the expectation is taken over the i.i.d. diagonal entries of $%
\boldsymbol{\tilde{E}}$. Next, we address the fact that the virtual flat
fading process $\boldsymbol{\tilde{E}}$ affects the MTs' signals $%
\boldsymbol{\tilde{X}}$ and not the outputs of the BSs $\boldsymbol{%
\tilde{H}}\boldsymbol{\tilde{X}}$. By recalling that 
\begin{equation}  \label{eq: input is output}
\det\left(\boldsymbol{I}+\tilde{\boldsymbol{H}}\tilde{\boldsymbol{E}} \tilde{%
\boldsymbol{E}}^\dagger\tilde{\boldsymbol{H}}^\dagger\right)= \det\left(%
\boldsymbol{I}+\tilde{\boldsymbol{E}}^\dagger\tilde{\boldsymbol{H}}^\dagger 
\tilde{\boldsymbol{H}}\tilde{\boldsymbol{E}}\right)\ ,
\end{equation}
it is evident that for i.i.d. input $\boldsymbol{\tilde{X}}$ the resulting
throughput is the same, regardless whether the virtual flat fading affects
the input signal or the output signal. Examining the right-hand-side (RHS)
of \eqref{eq: input is output} it is concluded that up to a transpose
conjugate of the inter-cell interference coefficient vector $\{\alpha_l\}$
the per-cell throughput of the cellular uplink and the rate of the ISI
channel are equivalent.

\end{proof}

It is noted that the last proposition holds for other user activity pattern
statistics. For example the number of users per-cell can be assumed to be unbounded and 
drawn according to a Poisson distribution (see \cite{Shamai-Wyner-97-II}). 
Next, we consider several possible power control policies that determine
the actual fading distribution of the equivalent ISI channel.

\subsection{No Power Control}

When no power control (NPC) is used, each \emph{active} user (i.e., $%
e_{m,k}=1$) is transmitting with a fixed power $p_{m,k}=P/K$. Hence, the
fact that the active users can increase their transmission power while the
cell still meets its total power constraint, is ignored. 
For NPC it is easily verified that $\{\tilde{e}_{m,m}\}_{m=1}^M$ are i.i.d.
r.v.'s $\tilde{e}_{m,m}=\sqrt{L_mP/K}$ where $\{L_m\}_{m=1}^M$ are i.i.d. Binomial r.v.'s $L_m\sim\mathcal{BN}%
(K,1-q)$. It can be shown that for large $K$ and fixed $P$ the virtual
fading process consolidates and the per-cell throughput converges (and is upper
bounded for any $K$, not necessarily large) to that of a static system ($K$
active users in each cell) but with power penalty of $P(1-q)$.

\subsection{Adaptive Power Control}

According to the adaptive power control (APC) scheme each active user in the 
$m$th cell transmits using power $p_{m,k}=P/K_m$, where $K_m\triangleq%
\sum_{k=1}^K e_{m,k}$ is the number of active users in the $m$th cell. In
this case it is easily verified that the total cell power constraint is
satisfied and that $\{\tilde{e}_{m,m}\}_{m=1}^M$ are i.i.d. r.v.'s 
$\tilde{e}_{m,m}=\sqrt{L_mP}$ where $\{L_m\}_{m=1}^M$ are i.i.d. Bernoulli r.v.'s $L_m\sim\mathcal{B}(1-q^K)$. As with NPC, it is
easily shown that for large $K$ and fixed $P$ the virtual fading process
consolidates and the per-cell throughput converges (and is upper bounded for
any $K$, not necessarily large) to that of a static system with \textit{no
power penalty}.

\subsection{Cognitive Power Control}

For the cognitive power control (CPC) policy we use the convention that
inactive users are assumed to be aware of all the active users' messages
(see \cite{Devroye-Vu-Tarokh-SPMag08}). Accordingly, each inactive user
divides its power and transmits the active users' messages in a coherent
manner. Straightforward calculations yield that the optimal power of the virtual
user is 
\begin{equation}  \label{ }
P_L^\mathrm{o} = \left\{%
\begin{array}{cc}
(K-L+1)P & 0< L\le K \\ 
0 & L=0%
\end{array}%
\right.\ .
\end{equation}
Hence, $\{\tilde{e}_{m,m}\}_{m=1}^M$ are i.i.d. r.v.'s  
$\tilde{e}_{m,m}=\sqrt{P^\mathrm{o}_{L_m}}$ where $\{L_m\}_{m=1}^M$ are i.i.d. Binomial r.v.'s $L_m\sim\mathcal{BN}(K,1-q)$.
It is easily verified that the per-cell throughput in this case is upper
bounded for any $K$ by that of a static system but with power gain of $P\left(1+Kq-(K+1)q^K\right)$%
.

\section{Single-Cell Processing}

\begin{figure*}[tbp]
\begin{equation}  \label{eq: 2tap with APC beta explicit}
\beta = \frac{q^K\sqrt{2P(\left\vert \alpha_0\right\vert^2+\left\vert
\alpha_1\right\vert^2)+P^2(\left\vert \alpha_0\right\vert^4+\left\vert
\alpha_1\right\vert^4)-2P^2\left\vert \alpha_0\right\vert^2\left\vert
\alpha_1\right\vert^2(1-2q^{2K})+1}-q^{2K}P(\left\vert
\alpha_0\right\vert^2+\left\vert \alpha_1\right\vert^2)-1}{%
q^{2K}P^2\left(\left\vert \alpha_0\right\vert^2-\left\vert
\alpha_1\right\vert^2\right)^2-1}
\end{equation}%
\par
\begin{center}
\line(1,0){500}
\end{center}
\end{figure*}

For comparison purposes we consider single cell processing (SCP) schemes.
Here, each BS is aware of the activity pattern of its cell's users and these
of the interfering cells' users only. On the other hand it is aware of the
codebooks of its cell's users only while it is oblivious of the codebooks of
the interfering cells' users. Hence, each BS treats the signals stemming
from the interfering cells as Gaussian noise (conditioned on the activity
pattern). Accordingly, the sum-rate of the $m$th cell is an r.v. given by 
\begin{equation}
R_{\mathrm{scp}}(m) = \log_2\left(1+\mathrm{SINR}(m)\right)\ ,
\end{equation}
where $\mathrm{SINR}(m)$ is the signal to interference plus noise ratio at
the $m$th BS. Hence, the per cell throughput of the system is given by 
\begin{equation}  \label{eq: SCP rate SLLN}
\begin{aligned} R_{\mathrm{scp}}&=
\lim_{M\rightarrow\infty}\frac{1}{M}\sum_{m=1}^M
R_{\mathrm{scp}}(m)={\mathbb
E}\left[\log_2\left(1+\mathrm{SINR}\right)\right]\ , \end{aligned}
\end{equation}
where 
\begin{equation}  \label{ }
\mathrm{SINR} = \frac{\left\vert \alpha_0\right\vert^2 e^2_0}{1+\sum_{\substack{l=-L_1\\ l\neq0}}^{L_2} \left\vert \alpha_l\right\vert^2 e^2_l}\ ,
\end{equation}
and the expectation is taken over the arbitrary i.i.d. r.v.'s $%
\{e_l\}_{l=-L_1}^{L_2}$ which are distributed as the i.i.d. diagonal entries
of $\tilde{\boldsymbol{E}}$ according to the specific power control scheme
being used (i.e., NPC, APC, or CPC). The second equality of 
\eqref{eq: SCP rate SLLN} holds almost surely (a.s.) and is achieved by the strong law of large
numbers (SLLN). The latter is applicable here since the interference in each
cell stems from no more than $L_1$ neighboring cells to the left and $%
L_2$ neighboring cells to the right. Thus, $\mathrm{SINR}(m_1)$ and $\mathrm{%
SINR}(m_2)$ are i.i.d. for $\left\vert m_1-m_2\right\vert>L_1+L_2+1$.



\vspace{-0.0cm}

\section{Soft Handoff Model}

\vspace{-0.0cm}
Here we focus on the simplest instance of the considered
model. According to the \textit{soft-handoff} (SHO) model, depicted in Fig. %
\ref{fig: SHO model}, inter-cell interference stems from one adjacent cell
only (see \cite{Somekh-Zaidel-Shamai-SH-IT-2007}). In this case $L_1=0$, $%
L_2=1$, and only $\alpha_0$ and $\alpha_1$ are non-zero. The power spectral
density is given by 
\begin{equation}  \label{eq: 2tap spectrum}
S(f)=(\left\vert \alpha_0\right\vert^2+\left\vert
\alpha_1\right\vert^2)+2\left\vert \alpha_0\right\vert\left\vert
\alpha_1\right\vert\cos(2\pi f+\phi)\ ,
\end{equation}
where $\phi=\angle\ (\alpha_1 \alpha^\dagger_0)$. The integrals on the RHS
of \eqref{eq: Tulino rate} and \eqref{eq: Tulino fix point} reduce for the SHO model to 
\begin{equation}  \label{eq: 2tap rate}
\int_0^1\log_2\left(1+\gamma\beta S(f)\right)df=\log_2\left(\frac{a+\sqrt{%
a^2-b^2}}{2}\right)\ ,
\end{equation}
and 
\begin{equation}  \label{eq: 2tap fix point}
\int_0^1\frac{1}{1+\gamma\beta S(f)}df=\frac{1}{\sqrt{a^2-b^2}}\ ,
\end{equation}
respectively, where 
\begin{equation}
\begin{aligned} 
a\triangleq 1+\gamma\beta(\left\vert
\alpha_0\right\vert^2+\left\vert \alpha_1\right\vert^2)\ {\rm and}\ b\triangleq
2\gamma\beta\left\vert \alpha_0\right\vert\left\vert \alpha_1\right\vert\ .
\end{aligned}
\end{equation}
Expressions \eqref{eq: 2tap rate} and \eqref{eq: 2tap fix point} hold for
all power control schemes (or any equivalent fading distribution).

In case where APC is applied, hence $A\in\{0,1\}\sim \mathcal{B}(1-q^K)$ and 
$\gamma=P$, the result can be expressed in closed form (using 
\eqref{eq:
Tulino rate erasure}, \eqref{eq: Tulino fix point erasure}, 
\eqref{eq: 2tap
rate}, and \eqref{eq: 2tap fix point}), and the MCP rate is given by
\begin{equation}
R_{\mathrm{mcp}} = \log_2\left(\frac{a+\sqrt{a^2-b^2}}{2}\right)+d(q^K\|1-%
\beta)\ ,
\end{equation}
where $\beta$ is given explicitly by \eqref{eq: 2tap with APC beta explicit}.

\section{Numerical Results}

\begin{figure}[t]
\begin{center}
\includegraphics[scale= \myscale]{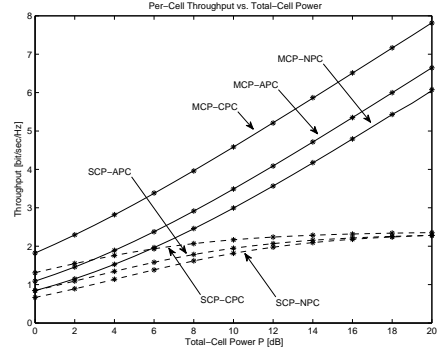} \vspace{-0.45cm}
\end{center}
\caption{Per-cell throughput vs. the total cell power $P$ supported by the
SHO model ($\protect\alpha_0=1,\ \protect\alpha_1=0.5$), for $K=5$ users
per-cell, and non-activity probability $q=0.3$.}
\label{fig: Rate vs power}
\end{figure}

\begin{figure}[t]
\begin{center}
\includegraphics[scale= \myscale]{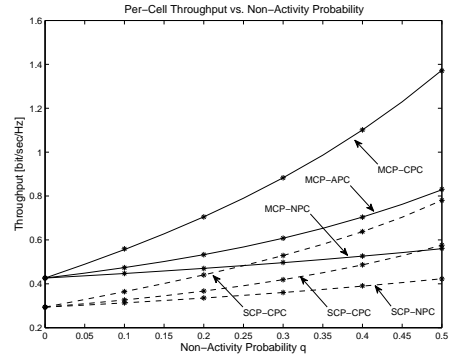} \vspace{-0.45cm}
\end{center}
\caption{Per-cell throughput vs. the non-activity probability $q$ supported
by the SHO model ($\protect\alpha_0=1,\ \protect\alpha_1=0.5$), for $K=5$
users per-cell, and total cell power $P=5\ [\mathrm{dB}]$.}
\label{fig: Rate vs q}
\end{figure}

\begin{figure}[t]
\begin{center}
\includegraphics[scale= \myscale]{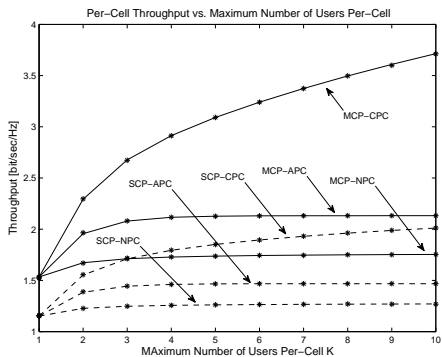} \vspace{-0.45cm}
\end{center}
\caption{Per-cell throughput vs. the number of users per-cell $K$ supported
by the SHO model ($\protect\alpha_0=1,\ \protect\alpha_1=0.5$), for
non-activity probability $q=0.3$, and total cell power $P=5\ [\mathrm{dB}]$%
.}
\label{fig: Rate vs K}
\end{figure}

\begin{figure}[t]
\begin{center}
\includegraphics[scale= \myscale]{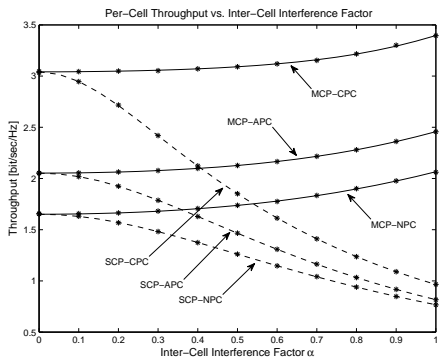} \vspace{-0.45cm}
\end{center}
\caption{Per-cell throughput vs. the inter-cell interference factor $\protect%
\alpha$ supported by the SHO model ($\protect\alpha_0=1,\ \protect\alpha_1=%
\protect\alpha$), for non-activity probability $q=0.3$, and total cell
power $P=5\ [\mathrm{dB}]$.}
\label{fig: Rate vs al SHO}
\end{figure}


In this section, we present numerical results for various settings of
interest. For all settings, results of the three power control schemes (NPC,
APC and CPC) for both MCP and SCP approaches are presented. Unsurprisingly,
for each power control scheme it is observed that joint MCP is always
beneficial over SCP. Moreover, for MCP, CPC is beneficial over APC, which in
turn is beneficial over NPC. This is because for MCP the resulting channel,
given an activity pattern, is a MAC channel whose sum-rate increases without
bound with the total transmit power. The same relations are also observed
for SCP under all tested conditions. We further note that all the rates
presented here are plotted using both analytical expressions (continuous
and dashed lines for MCP and SCP respectively) and Monte-Carlo (MC)
simulations (marked by asterisks). Examining Figures \ref{fig: Rate vs power}%
-\ref{fig: Rate vs al SHO} an excellent match between the MC and exact
results is observed for all cases over a wide range of system parameters.

In Figure \ref{fig: Rate vs power}, the per-cell throughputs supported by
the SHO model ($\alpha _{0}=1,\alpha _{1}=0.5$) are plotted as functions of
the total cell power $P$, for $K=5$ users per-cell, and non-activity
probability $q=0.3$. In addition to the notable power offset gain of the
MCP-CPC rate over MCP-NPC and -APC rates, an interference limited behavior
is observed for the rates of all SCP schemes, while those of all the MCP
schemes increase without bound with the power $P$. The throughputs \textit{%
per-active user}\footnote{%
Due to the SLLN this amounts to dividing the per-cell rate by $(1-q)K$.} of
the same setting but with $P=5\ [\mathrm{dB}]$ are plotted as functions of the
non-activity probability $q$. It is observed that the per-active user rates
of all schemes increase with $q$. Obviously, the MCP rates coincide for $%
q=0$. (The same applies to the SCP rates.) In Figure \ref{fig: Rate vs K} we
demonstrate the important role played by the number of users per-cell $K$
for the same setting of Figure \ref{fig: Rate vs power}. Examining the
figure, the benefit of MCP-CPC over all other schemes is observed. This is
because its resulting average power increases without bound with $K$ (under
fixed total cell power), while the other MCP schemes result in a bounded
power and all SCP schemes yield bounded SINRs. 
Finally, in Fig. \ref{fig: Rate vs al SHO} the per-cell throughputs
supported by the SHO model ($\alpha _{0}=1,\alpha _{1}=\alpha $) for $q=0.3$%
, and total cell power $P=5\ [\mathrm{dB}]$, are plotted as functions of the
inter-cell interference factor $\alpha $. Obviously, for each power scheme
the rates of MCP and SCP coincide when no inter-cell interference is present 
($\alpha =0$). Also notable is that all SCP rates decrease with increasing $%
\alpha $. This is because the inter-cell interference power increases with $%
\alpha $ while the useful signal power is unchanged. In contrast, the MCP
rates increase with $\alpha $ for the SHO model. It is noted that the latter
does not hold in general when interferences stem from more than one cell. 
\vspace{-0.0cm}

\section{Concluding Remarks}

\vspace{-0.0cm}
In this work we have studied the performance of a non-fading
cellular uplink system with cooperative BSs and random numbers of users per
cell. Using the simple Wyner model family and focusing on the large system
limit, we have established an analogy between the per-cell throughput of the
dynamic cellular uplink and the achievable rate of an LTI ISI channel with
flat fading, under certain conditions. In particular we have shown that the
power control scheme being used in the cellular uplink determines the fading
statistics of the ISI channel. Using a recent result regarding the
achievable rate of the ISI channel \cite{Tulino-Verdu-Caire-Shamai-ISIT08},
expressions for the cellular throughput are provided. Moreover, for cases
where interference stems from only one adjacent cell, the rate is explicitly
given, revealing analytically the impact of different system parameters.
Finally, we have demonstrated the benefits of joint MCP over the
conventional SCP approach for several cases of interest. For instance, we
have shown that combining cooperative BSs and cognitive MTs provides a dramatic
increase in system performance.


\vspace{-0.0cm}

\section*{Acknowledgment}

This work was supported by a Marie Curie Outgoing International Fellowship
and the NEWCOM++ network of excellence both within the 6th and 7th EU Framework Programmes, the U.S. National Science Foundation under Grants
CNS-06-26611, CNS-06-25637 and CCF-09-14899, and the REMON consortium for wireless communication.

\vspace{-0.0cm}
\bibliographystyle{ieeetr}
\bibliography{Reference_List}

\end{document}